\newcommand{\blind}{0}
\newtheorem{thm}{Theorem}[section]
\newtheorem{lem}[thm]{Lemma}
\renewcommand{\v}[1]{\mathbf{#1}}
\newcommand{\vs}[1]{\boldsymbol #1}
\begin{document}

\def\spacingset#1{\renewcommand{\baselinestretch}%
{#1}\small\normalsize} \spacingset{1}


\if0\blind
{
  \title{\bf Rank conditional coverage and confidence intervals in high dimensional problems}
  \author{Jean Morrison\thanks{
    J.M. and N.S. were supported by NIH Grant DP5OD019820.}\hspace{.2cm}\\
    and \\
    Noah Simon \\
    Department of Biostatistics, University of Washington, Seattle, WA}
  \maketitle
} \fi

\if1\blind
{
  \bigskip
  \bigskip
  \bigskip
  \begin{center}
    {\LARGE\bf Rank conditional coverage and confidence intervals in high dimensional problems}
\end{center}
  \medskip
} \fi

\bigskip
\begin{abstract}
Confidence interval procedures used in low dimensional settings are often inappropriate for high dimensional applications.
When a large number of parameters are estimated,
marginal confidence intervals associated with the most significant estimates have very low coverage rates: They are too small and centered at biased estimates.
The problem of forming confidence intervals in high dimensional settings has previously been studied through the lens of selection adjustment. In this framework, the goal is to control the proportion of non-covering intervals formed for selected parameters.

In this paper we approach the problem by considering the relationship between rank and coverage probability.
Marginal confidence intervals have very low coverage rates for significant parameters and high rates for parameters with more boring estimates. Many selection adjusted intervals display the same pattern.
This connection motivates us to propose a new coverage criterion for confidence intervals in multiple testing/covering problems --- the rank conditional coverage (RCC). This is the expected coverage rate of an interval given the significance ranking for the associated estimator. 
We propose interval construction via bootstrapping which produces small intervals and have a rank conditional coverage close to the nominal level. These methods are implemented in the {\tt R} package {\tt rcc}.
\end{abstract}

\noindent%
{\it Keywords:}   bootstrap/resampling; Multiple comparisons; Selective inference; Winner's curse;
\vfill

\newpage
\spacingset{1.45} 
\section{Introduction}
\label{sec:intro}
In many fields including genomics, proteomics, biomedical science, and neurology it is now common to conduct ``high dimensional'' studies in which thousands or millions of parameters are estimated.
Often, one of the main goals of these studies is to select a small subset of features for description and future investigation. 
Estimates of selected parameters are often reported unadjusted and, when confidence intervals are not entirely omitted, either marginal or Bonferroni-corrected intervals are given. Many previous authors have demonstrated the undesirable features of these practices (\cite{Efron2011}, \cite{Sun2005}, and \cite{Simon2013} among others). In particular, features selected for reporting are usually chosen because they have the largest or most significant estimates; unfortunately these most extreme estimates are also highly biased. This is known informally as the ``winner's curse'': Large statistics tend to come from large parameters, but they also tend to be large by chance.

There is a related phenomenon for confidence intervals: Marginal confidence intervals almost always fail to cover parameters associated with the most significant estimates because they fail to account for the bias of these estimates. This results in overly short intervals that are too far from zero. 
Often this problem is recognized by investigators, but the most commonly used alternative, the Bonferroni correction, yields enormous, uninformative intervals. 
Problems with the Bonferroni correction have been described by \cite{Benjamini2005}, \cite{Zhong2008}, \cite{Weinstein2013} and several others. 
This method does not recenter the intervals and inflates their size symmetrically in order to control the family-wise error rate. Intuitively, we know that the largest estimates are more often too large than too small so most of the upper extension provided by the Bonferroni confidence intervals is unnecessary. Furthermore, the family-wise error rate is a much more conservative criterion than is typically desired.

We will discuss several alternative coverage criteria for high dimensional settings and introduce rank conditional coverage (RCC). Intuitively, RCC is the expected coverage probability of an interval given the ranking of its corresponding estimate. 
The RCC captures the idea that, when many parameters are estimated, the rank of an estimate provides information about its bias and the coverage probability of the associated confidence interval. 
This criterion applies to all high dimensional studies regardless of whether or not selection is performed but is particularly relevant when rank is used as a selection criterion. 
One advantage to obtaining confidence intervals that control the RCC, rather than selection adjusted confidence intervals, is that these intervals are not dependent on a particular selection procedure --- i.e. the interval calculated for a particular estimate will not change if the selection threshold is moved.
We discuss several approaches for obtaining useful confidence intervals --- these include a parametric and non-parametric bootstrap approach --- and explore their behavior and RCC in a few common settings.

\subsection{Coverage criteria after selection}\label{sec:coverage_criteria}
Consider an analysis in which we would like to obtain estimates and confidence intervals for many parameters $\theta_{1},\dots,\theta_{p}$. Suppose we have point estimates for each parameter $\hat{\theta}_{1}, \dots, \hat{\theta}_{p}$ and a \emph{marginally-valid} procedure for constructing confidence intervals. By marginally-valid we mean that for each parameter $\theta_j$, the coverage probability of the $\alpha$ level confidence interval $CI_j$ satisfies $P[\theta_j \in CI_j(\alpha)] = 1-\alpha$.

In the high dimensional setting, the marginal confidence interval will control the \emph{average coverage} --- i.e. if we construct 90\% marginal confidence intervals, we can expect them to cover 90\% of the parameters. 
Typically, however, the entire set of parameter estimates is not of interest and only the most significant estimates are reported. 
\cite{Benjamini2005} demonstrate that, for common selection rules, the expected rate of coverage within a selected subset of parameters will be much lower than the desired level. The marginal confidence interval achieves its average coverage by under-covering parameters associated with the most extreme or significant estimates and over-covering more boring parameters.

One might instead want to control the conditional coverage probability: the average coverage rate of selected parameters. More formally, Let $S$ be a set of indices for selected parameters (note that $S$ will be stochastic). The conditional coverage probability is defined as
\begin{align}
\mbox{Conditional Coverage Probability} = \frac{1}{p}\sum_{j=1}^{p}P[ \theta_{j} \in CI_{j} \vert j \in S]
\end{align}

Unfortunately, it is not possible to construct confidence intervals guaranteed  to control this criterion under many common selection procedures. For example, if selection is based on excluding 0 from the confidence interval but $\theta_{1}=\dots = \theta_{p}=0$, the conditional coverage probability will be 0 for any set of confidence intervals.

As an alternative, \cite{Benjamini2005} propose the concept of the false coverage statement rate (FCR). The FCR measures the probability of making a false coverage statement. This concept is very similar to conditional coverage probability.  There are two differences: 1) it averages coverage over the selected set, and 2) if nothing is selected, it is counted as ``no false statements'' being made. \cite{Benjamini2005} also restrict application of FCR to intervals which are constructed after selection.
\begin{align}
\mbox{False Coverage Statement Rate} = & E[Q]\\
Q =& \begin{cases} \frac{\sum_{j\in S} 1_{\theta_{j} \not\in CI_{j}}}{\vert S \vert} \qquad & \vert S \vert > 0\\
0 \qquad & \vert S \vert =0
\end{cases}\nonumber
\end{align}

To control FCR, \cite{Benjamini2005} give a procedure which, like the Bonferroni procedure,
 symmetrically inflates the size of marginally-valid intervals for all parameters. When selection is based on parameter estimates exceeding a threshold, these are equivalent to coverage $(1- \frac{\vert S \vert \alpha}{p})$ marginal intervals. This uniform inflation can be excessive in some cases. For example, if one parameter is very large, it will nearly always be selected. Thus, there is no need to inflate that interval at all (a $1-\alpha$ marginal interval will still control FCR). 
Additionally, the fact that highest ranked estimates are more often too large than too small suggests that  confidence intervals should have longer tails extending towards the bulk of the estimates than extending away from the bulk. 
 

These issues are partially accounted for in more recent literature: \cite{Zhong2008}, \cite{Weinstein2013}, and \cite{Reid2014} all propose FCR controlling intervals which return to the marginal interval for very large parameter estimates. 
\cite{Zhong2008} and \cite{Weinstein2013} both condition on selecting all estimates larger than a (possibly data-dependent) cutoff. 
\cite{Zhong2008} use a likelihood-based approach to obtain asymptotically correct FCR while \cite{Weinstein2013} calculate exact intervals under the assumption that parameter estimates are independent with a known symmetric unimodal distribution. 
\cite{Reid2014} condition on the identity of the selected set and construct exact intervals for finite sample sizes assuming that parameter estimates are drawn from independent Gaussian distributions. All three of these intervals are asymmetric about the original point estimate.

Despite these advances, controlling FCR remains an unsatisfying solution to confidence interval construction for high dimensional problems. FCR controlling methods achieve the correct coverage rate within a subset in the same way that the marginal intervals achieve the correct rate in the larger set --- with under-coverage of the (more interesting) highest ranked parameters and over-coverage of (less interesting) more moderately ranked parameters. We illustrate this pattern through a simple example in Section~\ref{sec:example}.

\subsection{Rank conditional coverage}
\label{sec:rcc_intro}

In Section~\ref{sec:coverage_criteria} we observed that, for unadjusted confidence intervals as well as the selection adjusted alternatives, the rank of an estimate is informative about the probability that the associated confidence interval covers its target (see also Section~\ref{sec:example}). 
We find this phenomenon undesirable since it means that parameters associated with top ranked estimates are covered at a much lower rate than parameters associated with less significant estimates. 
Additionally, this observation indicates that there is an opportunity to use more information and construct better intervals. 

We first introduce the concept of rank conditional coverage (RCC) as a way to quantify the relationship between rank and coverage probability. 
In the majority of cases, the most interesting ranking of parameters is based on either the size of an associated test statistic or a $p$-value. 
In general, we assume that we have some ranking function $s$ where $s(i)$ gives the index of the $i$th ranked estimate. For example, if we are ranking simply based on the size of estimates, then
\[
\hat{\theta}_{s(1)} \geq \dots \geq \hat{\theta}_{s(p)}.
\]
In this paper, we will use the convention that a smaller rank indicates that an estimate is more significant, so the most significant estimate will have rank 1. In our examples, we focus on simple common rankings but the RCC could be defined for any scheme, and in fact the ranking scheme need not give a rank to every estimate. For example, if the parameter estimates can be grouped into highly correlated subsets such as LD blocks, we might choose the most significant estimate in each block and rank only this selected set. This type of ranking scheme is discussed at greater length along with simulation results in Section~\ref{sec:sims_block} of the Appendix.

We define the RCC at rank $i$ of a set of confidence intervals $CI_{1} \dots CI_{p}$ as
\begin{align}
\text{Rank Conditional Coverage}_{i} =&  P[\theta_{s(i)} \in CI_{s(i)}]\\
= & \sum_{j=1}^{p}P[ \theta_{j} \in CI_{j} | s(i) = j ]\cdot P[s(i)=j]
\end{align} 
This quantifies how often the interval formed around the $k$-th ranked estimate contains its target parameter. This is an appealing criterion, since we have a strong interest in ensuring that intervals around our most promising candidate features contain their targets. Something to note here is that we are not conditioning on which specific features achieve a given rank. Rather, we are averaging over all features (weighted by their probability of achieving that rank). While FCR summarizes the average coverage of a confidence interval procedure applied to a set of selected parameters in a single number, RCC gives a separate estimate of coverage probability for each rank and is not directly related to a selection procedure. 

\subsection{Implications of Controlling RCC}

Intervals that control RCC do not provide guarantees for particular parameters. For example, suppose $\theta_1$ is of special interest. If we use an RCC controlling method with $\alpha=0.1$, we cannot say that, if the experiment were repeated many times, $\theta_1$ would be contained in $CI_1$ in 90\% of experiments. 
Thus, if there is particular prior interest on one or a few parameters, RCC is not the correct criterion to control.

A correct statement that could be made about intervals controlling the RCC is that, if the experiment were repeated many times, 
 we expect the parameter corresponding to the top ranked estimate to be contained in its interval in 90\% of experiments.
 A similar statement could be made for any rank.
While this property may seem less intuitive on its surface, it has important implications when parameters are selected based on rank or significance. 

For example, suppose that a researcher publishes the results of many genome-wide association studies, each time reporting the most significant effect size estimates. If these estimates were paired with confidence intervals controlling the RCC at 90\%, the researcher could expect that 90\% of the published intervals for 1st ranked estimates (or 2nd etc.) averaging over studies contain their parameters. 
Most followup studies are conducted specifically for the most promising parameters, so this is precisely the type of guarantee needed to ensure these followup studies are worthwhile.

This guarantee is stronger than that made by the FCR. In fact, it is straightforward to see that confidence intervals which control RCC for every rank also control FCR for selection rules that choose the most significant parameters based on the same ranking used to define RCC (see Theorem~\ref{thm:rcc_fcr} of the Appendix).

One major strength of the RCC over the FCR is that RCC controlling confidence intervals can be divorced from the selection procedure. For example, if FCR controlling intervals are published for the top 10 parameter estimates but we are only able to follow up on the top 5 then we will need to recompute new, wider intervals in order to guarantee coverage within the smaller set. Using RCC controlling intervals, the same intervals remain valid regardless of how many parameters are selected.

\subsection{Relationship of RCC to Empirical Bayes Approaches}
The observation that motivates the RCC is that, in a study estimating many parameters, the full set of estimates can provide information about the true underlying parameter values.
This is the same idea that motivates empirical Bayes (EB) approaches to simultaneous inference problems.
In a Bayesian paradigm, we are interested in estimating the posterior distributions of $\theta_1, \dots, \theta_p$ which, assuming conditional independence of the estimates and using Bayes rule, we can express as
\[
p(\theta_i | \hat{\theta}_i) \propto p(\hat{\theta}_i \vert \theta_i)p(\theta_i).
\]
The idea of EB approaches such as those of \cite{Efron2008} and \cite{Stephens2016} is to assume a theoretical distribution for $\hat{\theta}_i \vert \theta_i$ and use the large number of parameter estimates to estimate the prior $p(\theta_i)$.
For example, in the ashr method proposed by \cite{Stephens2016}, $p(\theta_i)$ is assumed to be unimodal and centered at zero and, in one of several proposed variations, is estimated as a mixture of normal distributions. Provided the EB modeling assumptions hold, 
we can expect that, averaging over many realizations, the $1-\alpha$ EB credible intervals contain the true parameter $1-\alpha \%$ of the time and are immune to selection bias. That is, in a Bayesian system where a single realization of an experiment also includes resampling the parameter values, EB credible intervals should control the RCC.

The bootstrapping approaches we describe in Section~\ref{sec:par} differ from EB methods in that they require fewer modeling assumptions and are derived from a frequentist perspective. We tend to view the parameters $\theta_i$ as fixed 
and use the large number of estimates to learn about the distribution of the bias $\hat{\theta}_{s(i)} -\theta_{s(i)}$. This method requires no assumptions about the form of $p(\theta_i)$. In the non-parametric version, we are also able to avoid assumptions about the form of $p(\hat{\theta}_i | \theta_i)$,  provided we have access to the individual level data used to produce the original estimates. The flexibility of the non-parametric method does come at the expense of increased computational effort. The parameteric bootstrap can be fairly efficient and in the example in Section~\ref{sec:example} is 8 times faster than ashr. The non-parametric bootstrap can be quite costly since we must be willing to repeat the entire analysis hundreds of times. However, in cases in which the parameter estimates are not independent or the theoretical distributions of test statistics are poor approximations, the non-parametric bootstrap is the most appropriate choice. 

\subsection{Example}
\label{sec:example}
Consider 1000 independent estimates $Z_{i}$ is drawn from a  $N(\theta_{i}, 1)$ distribution and ranked according to their absolute value. 
Figure \ref{fig:example} shows the average RCC over 100 simulations for the top 20\% of statistics for several different configurations of parameter values:
\begin{enumerate}[1.]
\item All parameters equal to zero.
\item All parameters small and non-zero: $\theta_{i}$ generated from a $N(0, 1)$ distribution but fixed for all simulations.
\item A few large non-zero parameters: $\theta_{i}=3$ for $i = 1\dots 100$ and $\theta_{i} = 0$ for $i > 100$.
\item A few small non-zero parameters: $\theta_{i}$ drawn from a $N(0, 1)$ distribution but fixed over all simulations for $i = 1\dots 100$ and $\theta_{i} = 0$ for $i > 100$.
\end{enumerate}
The standard marginal confidence intervals are $CI_{i} = Z_{i} \pm \Phi^{-1}(1-\alpha/2)$. 
In configuration 1, this interval has an RCC of $\sim 0$\% for the 65 most extreme observations but an RCC of $\sim 100$\% for statistics closer to the median giving an overall average of 90\% coverage.

\begin{figure}
\centering

Average Rank Conditional Coverage\\
\includegraphics[width=\textwidth]{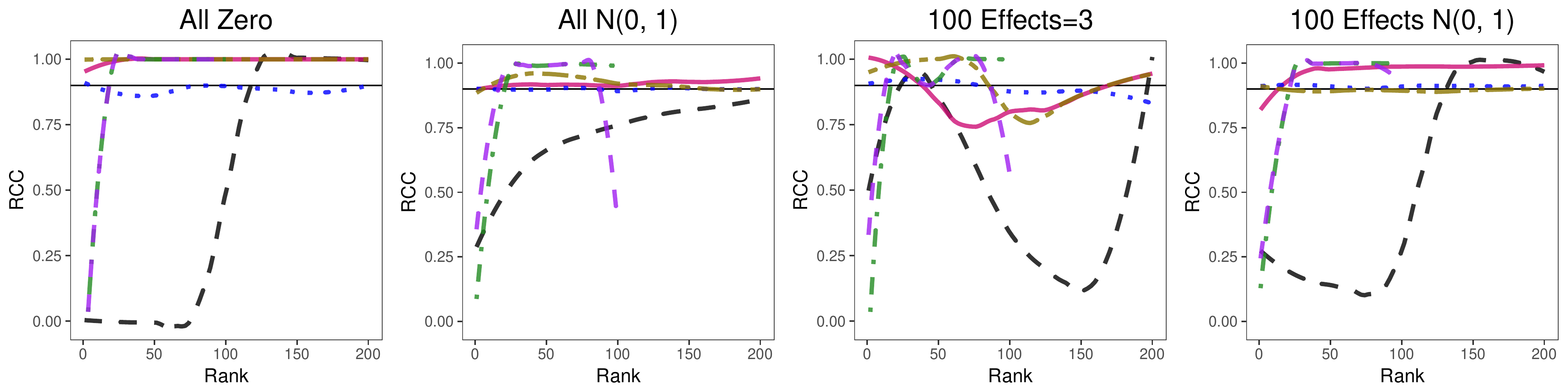}\\
Interval Width\\
\includegraphics[width=\textwidth]{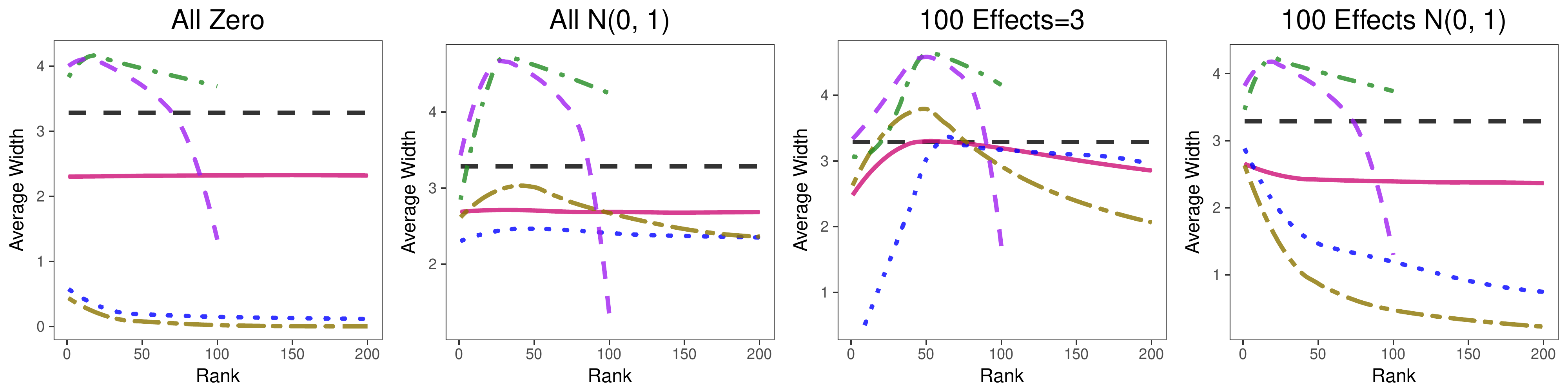}\\
\includegraphics[width=\textwidth]{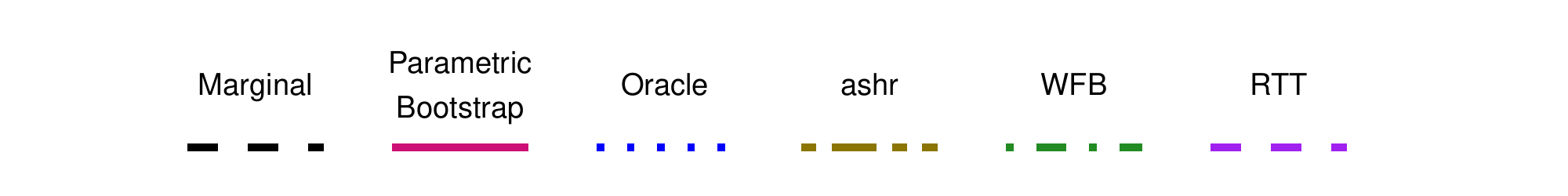}

\caption{Average rank conditional coverage (top) and interval widths (bottom) for the example described in Section~\ref{sec:example}. The top 20\% of statistics are shown and both coverage and width are smoothed using loess. The four sets of true parameters are described in Section~\ref{sec:example}. 
The horizontal line in the top plots shows the nominal level 90\%.
}
\label{fig:example}
\end{figure}

We see a similar pattern in the intervals constructed by \cite{Weinstein2013} and \cite{Reid2014}. Both provide intervals only for a selected subset of parameters (we selected the most extreme 10\% of observations). Both methods control FCR but do so by under-covering parameters associated with the most significant statistics and over-covering parameters with more moderate statistics. In settings 2 and 3, the intervals of \cite{Reid2014} have poor RCC for both the most and least extreme parameters considered. 
The credible intervals generated by the \texttt{ashr} method of \cite{Stephens2016} do control the RCC in this setting and can sometimes be very small. \texttt{ashr} is able to achieve a very small average interval width in settings 1, 3, and 4 because it attempts to shrink parameter estimates to zero. If the posterior probability that the parameter is equal to zero is larger than the desired level, the resulting credible interval will simply be $[0, 0]$.

Figure~\ref{fig:example} also shows the results of the parametric bootstrapping method  described in Section~\ref{sec:par}. This method provides an RCC close to the nominal level for all ranks and, in most cases, a shorter interval.
We see under-coverage for a small set of parameters in scenarios 3 and 4 in which most parameters are zero and a handful take on larger values. In all four scenarios, the bootstrapped confidence intervals are also shorter than either of the selection adjusted methods and sometimes shorter than the ashr credible intervals.

The deviations from the nominal level of RCC are a result of using estimated mean values to generate bootstrap samples. Hypothetically, if these values were known, we could produce an ``oracle'' estimate which, with enough Monte-Carlo samples, would achieve exactly the desired RCC for all order statistics. The oracle is shown in Figure~\ref{fig:example} and provides the motivation for the methods described in Section~\ref{sec:par} where it is discussed in more detail.
An in depth walk-through of these results as well as code for replicating Figure~\ref{fig:example} is available in 
\url{https://github.com/jean997/rccSims/walkthroughs/compare_cis.pdf}.

We explore the performance of these methods in two simulation studies designed to mimic common high dimensional analyses in Section~\ref{sec:sim}. 
In the first, we simulate an outcome and many features and estimate marginal effect sizes via single-variable linear regression. This is a common approach to genome-wide association studies.
In the second, we estimate the difference in average treatment effect for subsets of individuals defined the level of a biomarker. These estimates are highly correlated making the non-parametric bootstrap the most appropriate method.
\section{Parametric and Non-Parametric Bootstrapping to Build Confidence Intervals}
\label{sec:par}
\subsection{Rank conditional confidence intervals}
First consider estimating a single parameter of a single distribution $\theta = T(F)$. Let $\delta = \hat{\theta}-\theta$ and $H(x) = P[\delta \leq x]$ be the cdf of $\delta$. When $H$ is known, a pivotal exact $1-\alpha$ confidence interval can be constructed as
\begin{align}
\Big( \hat{\theta} - H^{-1}(1-\alpha/2), \ \hat{\theta} - H^{-1}(\alpha/2)\Big) \label{eq:ci_classic}
\end{align}


In the high dimensional setting, we are attempting to estimate $p$ parameters $\theta_{i} = T_{i}(F)$. We can construct a rank conditional analog of the classical pivotal interval in (\ref{eq:ci_classic}). Define $\hat{\theta}_{s(i)}$ as in section \ref{sec:rcc_intro} where $s(i)$ gives the index of the $i$th ranked parameter estimate. We define the bias of the estimates at each rank
\begin{align}
\delta_{[i]} = \hat{\theta}_{s(i)}-\theta_{s(i)}
\end{align}
where the subscript $[i]$ indicates ranked based indexing.
Let $H_{[i]} = P[\delta_{[i]} \leq x]$ be the cdf of $\delta_{[i]}$. Were $H_{[i]}$ known, an exact $1-\alpha$ confidence interval for $\theta_{s(i)}$ could be constructed as 
\begin{align}
CI_{s(i)}^{\text{exact}} = \Big(\hat{\theta}_{(i)}-H_{[i]}^{-1}(1-\alpha/2),\ \hat{\theta}_{(i)} -H^{-1}_{[i]}(\alpha/2)\Big) \label{eq:exactci}
\end{align}
We note that the rank conditional intervals are not pivotal because the distribution of $\delta_{[i]}$ depends on $\theta_{1}, \dots, \theta_p$. This makes them more difficult to obtain when $H_{[i]}$ are unknown but doesn't impact the coverage probability of \eqref{eq:exactci}.

\begin{lem}
The intervals in \eqref{eq:exactci} have exact $1-\alpha$ coverage:
\begin{equation*}
P[\theta_{s(i)} \in CI_{s(i)}^{\text{exact}}] = 1-\alpha
\end{equation*}
\end{lem}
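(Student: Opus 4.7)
The plan is to mimic exactly the classical pivotal argument, working with the rank-indexed bias $\delta_{[i]}$ in place of $\delta$. Since $CI_{s(i)}^{\text{exact}}$ is constructed by subtracting the upper and lower $\alpha/2$ quantiles of $H_{[i]}$ from the point estimate $\hat{\theta}_{s(i)}$, membership of $\theta_{s(i)}$ in the interval should translate directly into $\delta_{[i]}$ lying in its own central $1-\alpha$ probability band.

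The steps I would carry out are:
\begin{enumerate}[1.]
\item Start with the event $\{\theta_{s(i)} \in CI_{s(i)}^{\text{exact}}\}$ and rewrite it using the definition of the interval: this becomes
\[
\hat{\theta}_{s(i)} - H_{[i]}^{-1}(1-\alpha/2) < \theta_{s(i)} < \hat{\theta}_{s(i)} - H_{[i]}^{-1}(\alpha/2).
\]
\item Subtract $\hat{\theta}_{s(i)}$ and multiply by $-1$ to recognize $\delta_{[i]} = \hat{\theta}_{s(i)} - \theta_{s(i)}$, giving the equivalent event
\[
H_{[i]}^{-1}(\alpha/2) < \delta_{[i]} < H_{[i]}^{-1}(1-\alpha/2).
\]
\item Apply the definition of $H_{[i]}$ as the cdf of $\delta_{[i]}$: the probability of the event equals $H_{[i]}(H_{[i]}^{-1}(1-\alpha/2)) - H_{[i]}(H_{[i]}^{-1}(\alpha/2)) = (1-\alpha/2) - \alpha/2 = 1-\alpha$.
\end{enumerate}

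The only subtlety worth flagging is the behavior of the quantile function when $H_{[i]}$ has atoms or flat regions; strictly, one should assume $H_{[i]}$ is continuous (so that $H_{[i]}(H_{[i]}^{-1}(u)) = u$), and then the identity is exact. Otherwise one obtains coverage of at least $1-\alpha$ using the standard generalized-inverse inequality $H_{[i]}(H_{[i]}^{-1}(u)) \geq u$, which is the usual caveat attached to classical pivotal intervals and carries over unchanged here. The main conceptual point — and why there is essentially no real obstacle despite the rank-conditional setup — is that although $\delta_{[i]}$ is not pivotal (its distribution depends on $\theta_1,\dots,\theta_p$), the interval is built using the \emph{correct} distribution $H_{[i]}$ of $\delta_{[i]}$ rather than a pivotal substitute, so nonpivotality never enters the probability calculation. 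The hard work in the paper is therefore not in this lemma but in estimating $H_{[i]}$ in practice, which is what the bootstrap machinery of Section~\ref{sec:par} addresses.
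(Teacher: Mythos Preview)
Your proposal is correct and mirrors the paper's proof exactly: both subtract $\hat{\theta}_{s(i)}$ to convert the coverage event into $\{H_{[i]}^{-1}(\alpha/2)\le \delta_{[i]}\le H_{[i]}^{-1}(1-\alpha/2)\}$ and then apply the definition of the cdf $H_{[i]}$ to obtain $1-\alpha$. Your remark about continuity of $H_{[i]}$ is a reasonable technical caveat that the paper leaves implicit.
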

\begin{proof}
This proof is identical to the proof for the classical interval in (\ref{eq:ci_classic}) given by  \cite{Wasserman2006} among others.
Let $a = \hat{\theta}_{s(i)}-H_{[i]}^{-1}(1-\alpha/2)$ and $b = \hat{\theta}_{s(i)}-H_{[i]}^{-1}(\alpha/2)$. 
\begin{eqnarray*}
P[ a \leq \theta_{s(i)} \leq b] =& P[\hat{\theta}_{s(i)} - b \leq \delta_{[i]} \leq \hat{\theta}_{s(i)} -a]\\
=& H_{[i]}(\hat{\theta}_{s(i)}-a) - H_{[i]}(\hat{\theta}_{s(i)} - b)\\
=& H_{[i]}\Big( H_{[i]}^{-1}(1-\alpha/2) \Big) - H_{[i]}\Big(H_{[i]}^{-1}(\alpha/2)\Big)\\
=& 1-\frac{\alpha}{2}-\frac{\alpha}{2} = 1- \alpha
\end{eqnarray*}
\end{proof}


\subsection{Generating oracle intervals with Monte Carlo sampling}
Construction of the intervals in \eqref{eq:exactci} requires knowledge of the quantiles of the cdfs $H_{[i]}$. Working directly with $H_{[i]}$ may be difficult. If, instead, we can easily sample from the joint distribution $G$ of $\hat{\vs{\theta}} = (\hat{\theta}_1, \ldots, \hat{\theta}_{p})^{\top}$, then the quantiles of $H_{[i]}$ can be computed via Monte-Carlo. 
We now describe this oracle Monte-Carlo procedure which is detailed in algorithm \ref{alg:oracle} and illustrated in Figure~\ref{fig:algoracle}. 

First, we draw $K$ independent $p$-vectors $\vs{\vartheta}_1 \dots \vs{\vartheta}_K$ from $G$. Let $\vartheta_{k,s_{k}(i)}$ be the $i$th 
ranked
element of $\vs{\vartheta}_{k}$ and $s_k$ be the ranking permutation function for $\vs{\vartheta}_k$. Define the observed bias in sample $k$ at rank $i$ as
\begin{equation}
\tilde{\delta}_{k,[i]} = \vartheta_{k,s_k(i)} - \theta_{s_k(i)} \label{eq:delta}
\end{equation}
In the left panels of Figures~\ref{fig:oracle_1} and \ref{fig:oracle_100}, the true distribution $G$ is shown in the background and the wider distributions of three sets of Monte-Carlo replicates are overlayed. In Figure~\ref{fig:oracle_1}, solid lines mark $\vartheta_{k, (1)}$ for $k = 1, 2, 3$ and dashed lines mark the corresponding true parameter value $\theta_{s_k(1)}$. The difference between these lines is the bias $\tilde{\delta}^{k, [1]}$. In Figure~\ref{fig:oracle_100}, the same lines are shown for the 100th rank.

Next, for each rank, we use the sample quantiles of $\left\{\tilde{\delta}_{1, [i]}\dots \tilde{\delta}_{K, [i]}\right\}$ to estimate the quantiles of $H_{[i]}$. We denote these sample quantiles as $\tilde{H}^{-1}_{[i]}(\cdot)$. 
In the middle panels of Figures~\ref{fig:oracle_1} and \ref{fig:oracle_100}, we show the distributions of $\left\{\tilde{\delta}_{1, [1]}\dots \tilde{\delta}_{K, [1]}\right\}$ and $\left\{\tilde{\delta}_{1, [100]}\dots \tilde{\delta}_{K, [100]}\right\}$ and the 0.05 and 0.95 quantiles are marked by vertical dashed lines. Note that the distribution of bias for the largest statistic is right shifted compared to the distribution of bias for the 100th statistic. 
 
Substituting these estimates into the interval in \eqref{eq:exactci} gives the oracle confidence interval
\begin{equation}
CI_{s(i)}^{\text{oracle}} = \Big(\hat{\theta}_{s(i)}-\tilde{H}^{-1}_{[i]}(1-\alpha/2),\ \hat{\theta}_{s(i)} -\tilde{H}^{-1}_{[i]}(\alpha/2)\Big) \label{eq:ci}
\end{equation}
This pivot is illustrated in the right panels of Figures~\ref{fig:oracle_1} and \ref{fig:oracle_100}. For both of the ranks shown, the oracle Monte-Carlo intervals are shorter than the marginal interval and contain the true parameter value.

These intervals are called oracle confidence intervals because they use knowledge of $G$ and $\theta_{1}, \dots, \theta_{p}$. We know that $\hat{H}^{-1}_{[i]}(x)\rightarrow H^{-1}_{[i]}(x)$ as the number of Monte-Carlo samples increases, so using (\ref{eq:ci}) we can achieve the correct $1-\alpha$ confidence level (within any $\epsilon$ tolerance). This can be seen in Figure~\ref{fig:example} where the oracle intervals have very close to the target 90\% rank conditional coverage at all ranks and are shorter than other methods. The following sections describe bootstrapping methods for estimating $H_{[i]}^{-1}$ when the distribution of $\hat{\vs{\theta}}$ is unknown.

\begin{figure}
\centering
	\begin{subfigure}[b]{\textwidth}
		\caption{Oracle interval construction for the parameter with the largest estimate.}
		\includegraphics[width=\textwidth]{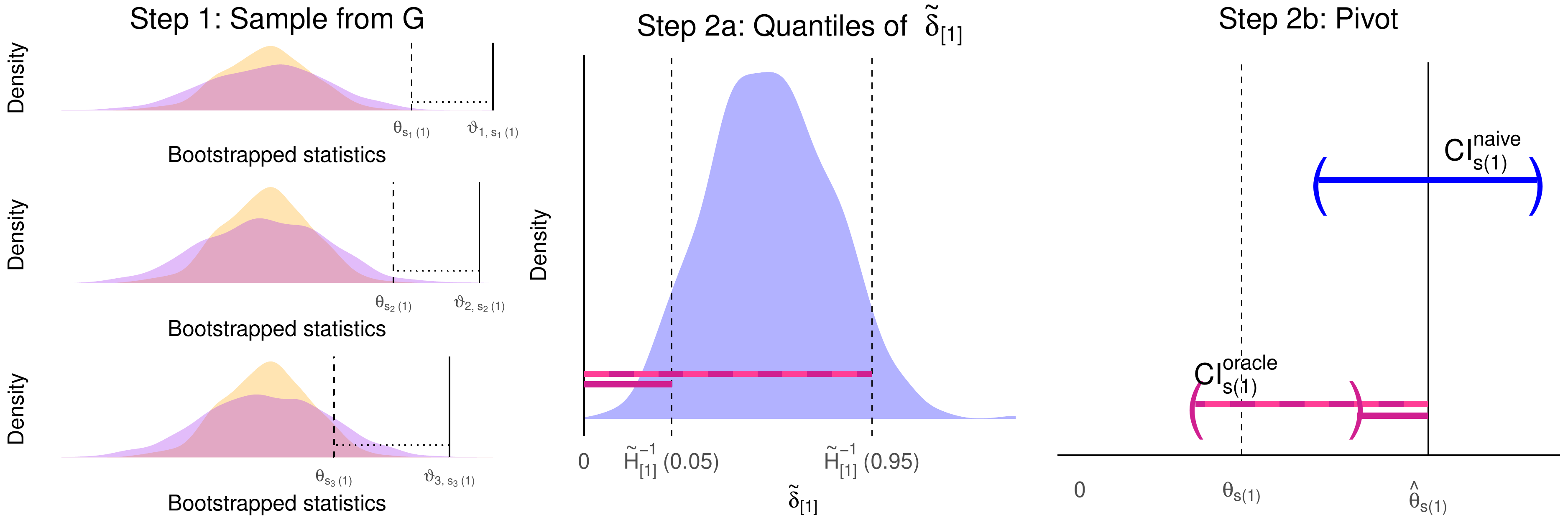}
		\label{fig:oracle_1}
	\end{subfigure}
	\begin{subfigure}[b]{\textwidth}
	\caption{Oracle interval construction for the parameter with the 100th largest estimate.}
		\includegraphics[width=\textwidth]{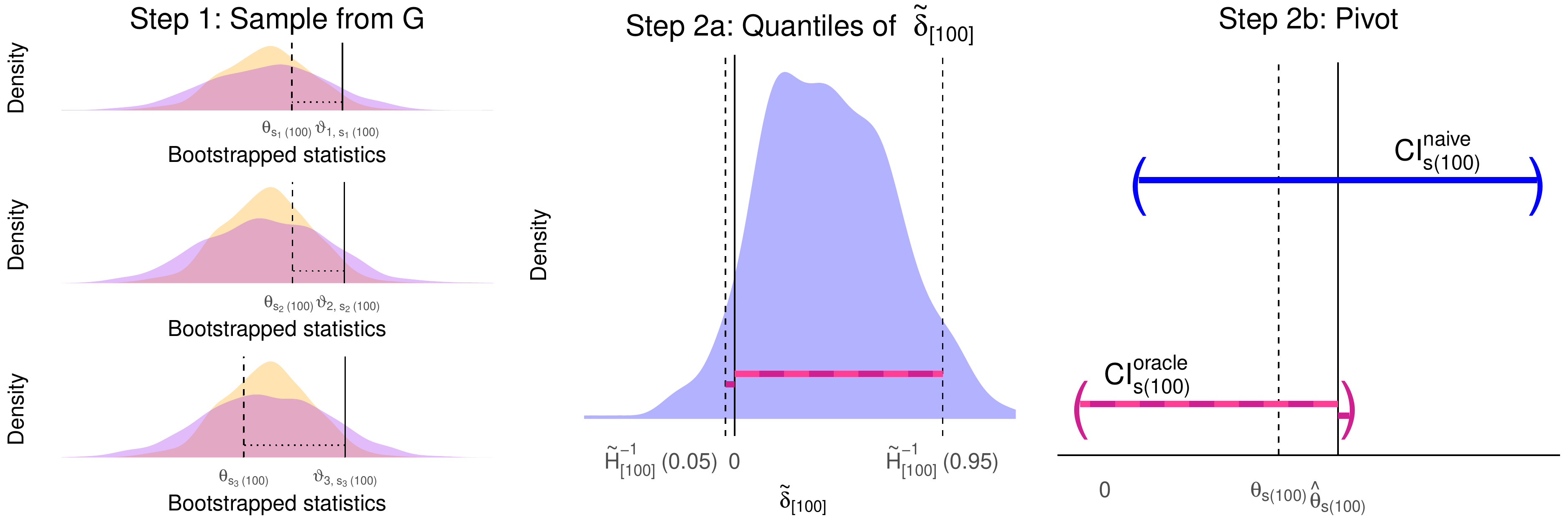}
		\label{fig:oracle_100}
	\end{subfigure}

\caption{Generating oracle confidence intervals using algorithm \ref{alg:oracle} for rank $i=1$ in \ref{fig:oracle_1} and $i=100$ in \ref{fig:oracle_100}.
\emph{Left panels:} The distribution of the true parameters, $G$, is shown in the background. The wider distributions from three sets of Monte Carlo replicates are overlayed. Solid vertical lines mark the locations of the largest (Figure~\ref{fig:oracle_1}) and 100th largest (Figure~\ref{fig:oracle_100}) element of the Monte Carlo sample. Dashed  vertical lines mark the corresponding true parameter value. The distance between these lines is the bias, $\delta_{k, [i]}$.
\emph{Middle:} The distribution of biases for 500 Monte Carlo samples with 0.05 and 0.95 quantiles marked by vertical dashed lines. 
\emph{Right:} The oracle interval is constructed by pivoting the quantiles in the middle panel around the observed test statistic $\hat{\theta}_{s(i)}$. The solid and dashed horizontal lines extending from $\hat{\theta}_{s(i)}$ are the same length as the solid and dashed horizontal lines in the middle panel. The dashed vertical line shows the location of the true parameter. The naive interval is shown for comparison. The vertical axis is meaningless.
}
\label{fig:algoracle}
\end{figure}

\begin{algorithm}[h!]
\caption{Generating oracle intervals}\label{alg:oracle}
$G$ and $\theta_1, \dots , \theta_p$ are known.
\begin{enumerate}[1.]
\item For $k$ in $1 \dots K$:
\begin{enumerate}[a.]
\item Sample $(\vartheta_{k, 1}\dots \vartheta_{k,p})$ from $G$.
\item Calculate the bias at each rank $\tilde{\delta}_{k,[i]}$ as in (\ref{eq:delta}).
\end{enumerate}
\item For $i$ in $1\dots p$
\begin{enumerate}[a.]
\item Calculate empirical quantiles $\tilde{H}^{-1}_{[i]}(x)$ of $\lbrace \tilde{\delta}_{1,[i]}\dots \tilde{\delta}_{K,[i]} \rbrace$
\item Generate $CI_{s(i)}^{\text{oracle}}$ as in \eqref{eq:exactci}
\end{enumerate}
\end{enumerate}
\end{algorithm}

The intervals, $CI_{s(i)}^{\text{oracle}}$, in  \eqref{eq:ci} do not necessarily contain $\hat{\theta}_{s(i)}$. This is particularly true if the point-estimates, $\hat{\theta}_{s(i)}$, have not been adjusted for multiplicity/selection bias --- for example, if each $\hat{\theta}_{s(i)}$ is a maximum likelihood estimate. 
In Figure~\ref{fig:oracle_1}, both the $1-\alpha/2$ and $\alpha/2$ quantiles of the observed-bias are positive so the confidence interval lies completely below $\hat{\theta}_{s(1)}$.
A more natural point estimate to pair with this confidence interval is the de-biased estimate proposed by \cite{Simon2013}, $\hat{\theta}_{s(i)} - \frac{1}{K}\sum_{k=1}^{K} \tilde{\delta}_{k,[i]}$, which will generally lie within the confidence interval. 

The following bootstrap methods for confidence interval construction are extensions of point estimation methods proposed by \cite{Simon2013} and \cite{Tan2014}. Those former are based on estimating the mean of $\delta_{[i]}$ while we estimate quantiles. We consider two bootstrapping strategies --- a parametric bootstrap, useful when the distribution and covariance of the parameter estimates are known or can be approximated well, and a non-parametric bootstrap method which can be applied to any set of estimates based on iid samples, but is more computationally costly. Both of these are general strategies, where the specifics of the algorithm may vary depending on the specific application.
\subsection{Parametric Bootstrap}
\label{sec:par_boot}
The parametric bootstrap parallels the Monte Carlo algorithm in Algorithm~\ref{alg:oracle} 
replacing $G$ and $\theta_{1}, \dots, \theta_p$ with estimates based on the data. 

We assume that $G$ is a member of a parametric family of distributions and estimate its parameters. 
Though, in principal we could use any family of distributions, in most cases we will wish to assume that $\hat{\theta}_{i} \sim N(\theta_{i}, \sigma_{i}^{2})$ where $\sigma_{i}^{2}$ is either known or can be estimated and $\hat{\theta}_{i}$ are independent. This type of parametric bootstrap is best suited for scenarios in which the estimator has an asymptotically normal distribution with known variance such as linear regression. $\hat{G}$ can be chosen by replacing $\theta_{i}$ with an estimate such as $\hat{\theta}_{i}$ itself or de-biased estimates of \cite{Simon2013} or \cite{Tan2014}. The latter choices will involve two stages of bootstrapping --- the first to generate a de-biased mean estimate and the second to generate confidence intervals.

The quantiles estimated through Monte Carlo simulation in \eqref{eq:exactci} are replaced by bootstrapped quantiles $\hat{H}^{-1}_{[i]}(x)$ obtained by sampling $p$-vectors from $\hat{G}$ rather than from $G$. This gives the bootstrap intervals
\begin{align}
CI_{s(i)}^{\text{boot}} = \Big(\hat{\theta}_{s(i)}-\hat{H}^{-1}_{[i]}(1-\alpha/2),\ \hat{\theta}_{s(i)} -\hat{H}^{-1}_{[i]}(\alpha/2)\Big) \label{eq:boot_ci}
\end{align}
This procedure is described in algorithm \ref{alg:parametric} for the case when $G = N(\vs{\theta}, \v{I})$ where $\vs\theta = (\theta_1, \dots, \theta_p)$ and $\v{I}$ is the $p\times p$ identity matrix. Many variations on this procedure are possible. For example, it may be easier to specify a distribution for a transformation of $\theta_i$.
When the family assumptions on $G$ are correct or asymptotically correct, a better RCC is obtained when the estimates used to calculate the bias are closer to the true parameter values. Algorithm \ref{alg:parametric2} of the Appendix shows the additional steps necessary when using a debiased mean estimate. 
If the ranking scheme is based on the absolute value of the parameter estimates (for example, using the magnitude of a $t$-statistic) it is necessary to reflect the the pivot across zero for negative parameter estimates. 
Algorithm \ref{alg:absparametric} of the Appendix gives the parametric bootstrap procedure for absolute value based rankings. 
Algorithm~\ref{alg:parametric} and the two variations described in the Apppendix are implemented in the \texttt{par\_bs\_ci} function of the R package \texttt{rcc}.

\begin{algorithm}[h!]
\caption{Simple parametric bootstrap for asymptotically normal estimates}\label{alg:parametric}
\begin{enumerate}[1.]
\item For $k$ in $1 \dots K$:
\begin{enumerate}[a.]
\item Sample $\vartheta_{k, i}$ from a $N(\hat{\theta}_{i}, 1)$ distribution 
for $i$ in $1\dots p$.
\item Calculate the bias at each rank $\hat{\delta}_{k,[i]}$ as 
\[\hat\delta_{k,[i]} = \vartheta_{k,s_k(i)} - \hat{\theta}_{s_k(i)}\]
\end{enumerate}
\item For $i$ in $1\dots p$:
\begin{enumerate}[a.]
\item Calculate empirical quantiles $\hat{H}^{-1}_{[i]}(x)$ of $\lbrace\hat\delta_{1,[i]}\dots \hat\delta_{K,[i]}\rbrace$
\item Generate $CI_{s(i)}^{\text{boot}}$ as in (\ref{eq:boot_ci})
\end{enumerate}
\end{enumerate}
\end{algorithm}

\subsection{Non-Parametric Bootstrap}
\label{sec:npar_boot}
The parametric bootstrap can be applied when $\hat{\vs{\theta}}$ has a distribution that is well approximated by a member of a parametric family. It is particularly convenient for statistics which are asymptotically normal and either independent or with a covariance that can be estimated well. Many high dimensional problems possess complex dependence structures which are not easy to estimate. Furthermore, sometimes our estimators do not have a known distribution. In these cases, the parametric bootstrap, the selection adjusted FCR controlling methods, and EB methods that assume conditional independence between estimates are unsuitable. 
In general, it is not possible to estimate a general $G$ without making any structural assumptions about the true parameters. However,
it is possible to generate bootstrap samples non-parametrically if individual data are available.

The non-parametric bootstrap is based on sampling from the data used to compute $\hat{\vs{\theta}}$ and computing new estimates using the re-sampled data. This is implicitly sampling from a distribution $\hat{G}$ without requiring an analytical form.
We assume the data consist of $n$ independent data vectors $\v{y}_{1}, \ldots, \v{y}_{n}$. These may be vectors of genotypes, biometric, or image data for $n$ individuals. They may be a mix of data types and include covariates. We assume only that there is a procedure which takes $\v{y}_{1},\ldots, \v{y}_{n}$ as inputs and generates estimates $\hat{\vs{\theta}}$  and statistics indicating the significance of each estimate. 

A bootstrap $p$-vector 
can be generated by sampling $n$ data vectors from $\v{y}_{1},\ldots, \v{y}_{n}$ without replacement and applying the original estimation procedure.
From this point confidence intervals may be constructed identically to the parametric case. 

More formally, if $\v{y}_{1},\ldots, \v{y}_{n}$ are iid draws from $\Pi$, and $\hat{\vs\theta}\equiv \hat{\vs\theta}\left(\v{y}_{1},\ldots, \v{y}_{n}\right)$ is a function of those observations, then $G\equiv G\left(\Pi\right)$ is directly a function of $\Pi$. 
To estimate $G$, we can use the estimate induced by the empirical distribution of the $\v{y}_i$: $\hat{G}_{emp} \equiv G\left(\Pi_n\right)$. From here we can estimate the quantiles in \eqref{eq:exactci} by $\hat{H}^{-1}_{emp[i]}(x)$ obtained by sampling repeatedly from $\hat{G}_{emp}$. This leads us to the non-parametric bootstrap intervals:
\begin{align}
CI_{s(i)}^{\text{np-boot}} = \Big(\hat{\theta}_{s(i)}-\hat{H}^{-1}_{emp[i]}(1-\alpha/2),\ \hat{\theta}_{s(i)} -\hat{H}^{-1}_{emp[i]}(\alpha/2)\Big) \label{eq:non-para-boot_ci}
\end{align}
The specifics of this procedure are shown in Algorithm~\ref{alg:nonparametric}  
and implemented in the \texttt{nonpar\_bs\_ci} function of the R package \texttt{rcc}.
Non-parametric bootstrapping can potentially be very time consuming. If the original analysis was computationally expensive it may be infeasible to repeat it many times to obtain confidence intervals.

\begin{algorithm}[h!]
\caption{Non-parametric bootstrap}\label{alg:nonparametric}
\begin{enumerate}[1.]
\item For $k$ in $1 \dots K$:
\begin{enumerate}[a.]
\item Sample $\v{y}_{k,1}, \ldots \v{y}_{k,n}$ with replacement from $\lbrace \v{y}_{i}\rbrace$
\item Using the sampled data, calculate estimates $\vartheta_{k,1}\dots \vartheta_{k,p}$.
\item Estimate the bias at each rank $\hat{\delta}_{k,[i]}$ as 
\[
\hat\delta_{k,[i]} = \vartheta_{k,s_k(i)} - \hat{\theta}_{s_k(i)}
\]
\end{enumerate}
\item For $i$ in $1\dots p$:
\begin{enumerate}[a.]
\item Calculate empirical quantiles $\hat{H}^{-1}_{emp[i]}(x)$ of $\lbrace\hat\delta_{1,[i]}\dots \delta_{K,[i]}\rbrace$
\item Generate $CI_{s(i)}^{\text{np-boot}}$ as in (\ref{eq:non-para-boot_ci})
\end{enumerate}
\end{enumerate}
\end{algorithm}

\section{Simulations}\label{sec:sim}

\subsection{Linear Regression with Correlated Features}
\label{sec:sims_lin}
In this set of simulations, we explore how correlation among parameter estimates effects the rank conditional coverage rates of different methods of confidence interval construction.
Code replicating these results as well can be found at \url{https://github.com/jean997/rccSims/walkthroughs/linreg_sims.pdf}.

We consider a common analysis procedure used in genetic and genomic studies. In these studies, researchers measure far more features (such as gene expression levels) than there are samples. They therefore focus on estimating the marginal association between each feature and an outcome. 
We consider a setting wherein the features occur in correlated blocks leading to correlated parameter estimates.

In each simulation, we simulate 1000 normally distributed features features for 100 samples. Let $x_{i,j}$ denote the value of the $j$th feature for the $i$th individual and $\v{x}_i = (x_{i,1}, \dots, x_{i,1000})^{T}$. The features are simulated as 
\[
\v{x}_i \sim N_{1000}(0, \Sigma)
\]
where the covariance matrix, $\Sigma$, is block diagonal with 100 $10\times 10$ blocks. The diagonal elements of each block are equal to 1 and the off diagonal elements are equal to $\rho$.  The outcome for individual $i$ is simulated as 
\[
y_i = \v{x}_i \vs{\beta} + \epsilon_i\qquad \epsilon_i \sim N(0, 1)
\]
where elements of $\vs{\beta}$ (the vector conditional effect sizes) are equal 0 at all but 100 elements. In each block the effect size for the 5th feature is drawn from a $N(0, 1)$ distribution while the effects for the other features are 0. These effects are fixed over all simulations. 

In this analysis we estimate the marginal rather than conditional effect sizes, $\beta^{(marg)}~=~\Sigma \vs{\beta}$. We estimate $\beta_j^{(marg)}$ through a univariate linear regression of $\v{x}_j = (x_{1, j}, \dots , x_{n, j})^{T}$ on $\v{y}$. This is a standard analysis strategy for many genomic studies such as genome wide association studies and gene expression studies.

We consider four levels of correlation between the features by setting $\rho$ equal to $0, 0.3, 0.8$ and $-0.1$. Rank conditional coverage and interval widths averaged over 400 simulations for each scenario are shown in Figure~\ref{fig:linreg}. 
In these results, parameter ranking is based on the absolute value of the $t$-statistic $\hat{\beta}_j^{(marg)}/\hat{se}(\hat{\beta}_j)$. In Section~\ref{sec:sims_block} of the Appendix we consider ranking the parameters by first selecting the parameter with the most significant estimate in each block and then ranking only these 100 selected parameters based on the absolute value of the $t$-statistic.
 
We find that both the parametric and non-parametric confidence intervals perform well in all four settings and are quite similar, even though the parametric bootstrap assumes independence between the estimates. None of the other methods provides an RCC close to the nominal level except for ashr in the highest correlation scenario. The ashr method does poorly in the other scenarios because the marginal effects are not sparse. Since ashr attempts to shrink parameters to zero, the credible intervals will often be to close to zero and too small in settings when the true parameter values are not sparse. 
 
\begin{figure}
\centering
Average Rank Conditional Coverage\\
\includegraphics[width=\textwidth]{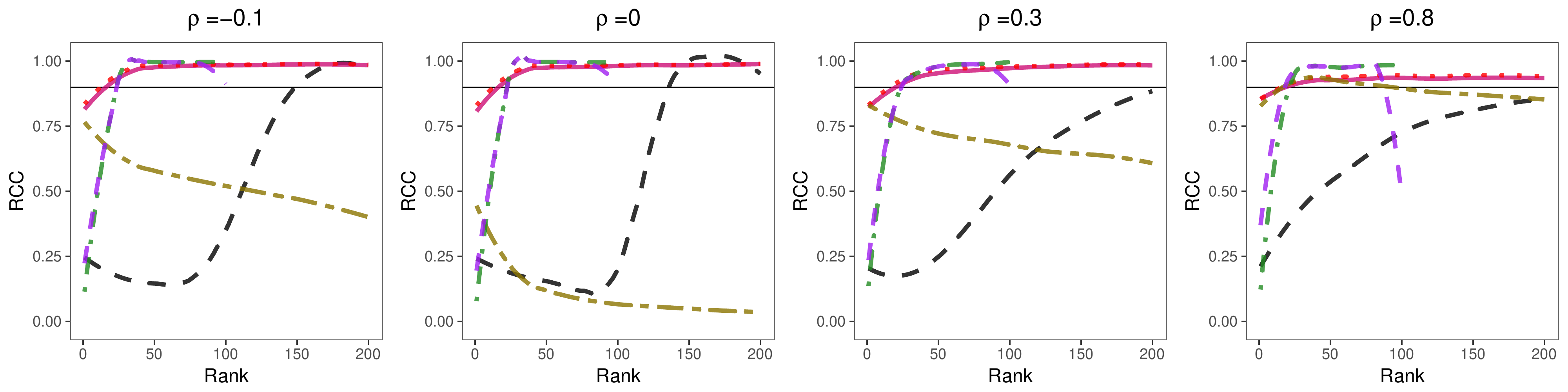}
Interval Width\\
\includegraphics[width=\textwidth]{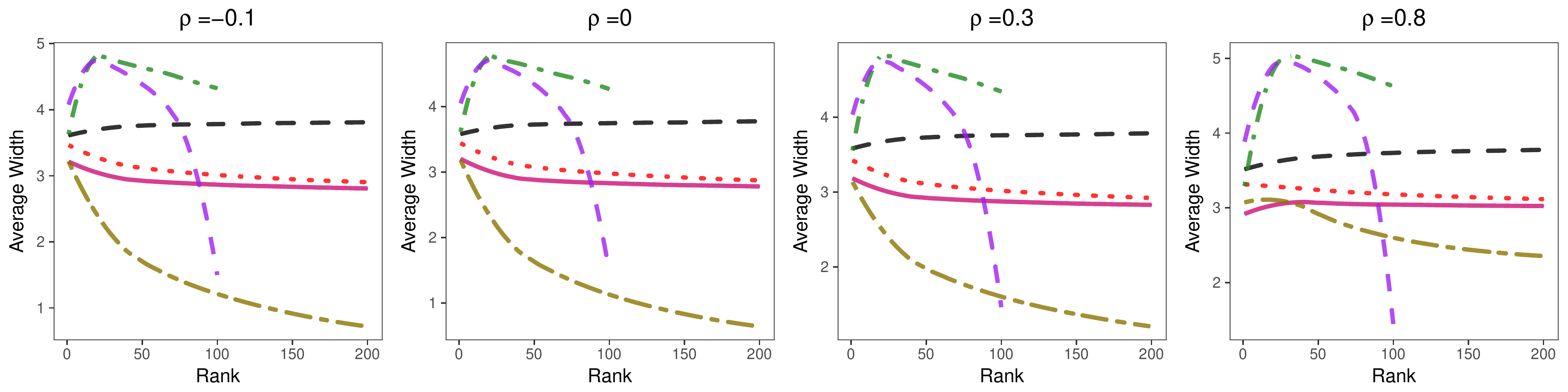}\\
\includegraphics[width=\textwidth]{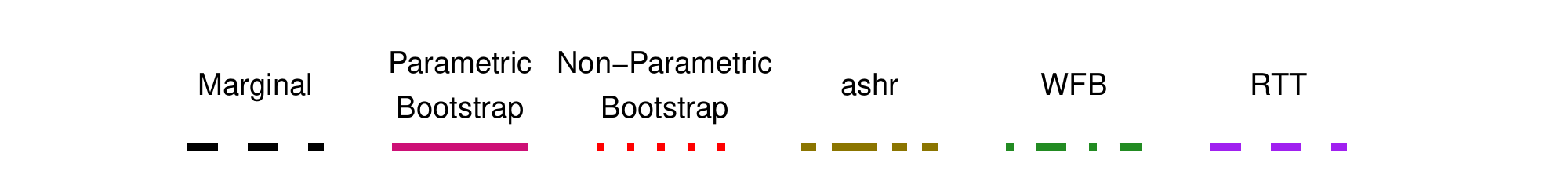}

\caption{Simulation results for Section~\ref{sec:sims_lin}. Rank conditional coverage (top) and interval widths (bottom) are shown for the top 20\% of parameters averaged over 400 simulations. Parameters are ranked by absolute value of the test statistic. Coverage rates and widths are smoothed using loess. 
In the top panel, a horizontal line shows the nominal level 90\%.
}
\label{fig:linreg}
\end{figure}

\subsection{Treatment Effects in Nested Subgroups}\label{sec:sims_biomarker}
In Section~\ref{sec:sims_lin}, we found that the parametric bootstrap performed well even when the assumption of independence between estimates was violated. 
Here we provide an example of how the parametric bootstrap can fail when estimates are very highly correlated. 
Code replicating these results can be found at \url{https://github.com/jean997/rccSims/walkthroughs/biomarker_sims.pdf}.

This example is motivated by the use of biomarkers in clinical trials. Suppose we have conducted a clinical trial in which participants are randomized into two groups. For participant $i$, we record the treatment group, $trt_i$, an outcome $y_i$ and the value of a biomarker, $w_i$. We expect that the treatment will have a greater effect in individuals with higher values of the biomarker but don't know the exact relationship between the biomarker and the treatment effect. In an exploratory analysis, we define a series of cut-points $c_1, \dots, c_p$. For each cut-point we estimate the difference in treatment effects for participants with biomarker measurements above and below the cut-point:
\begin{align*}
\beta_j = &  \left (E[y_i | trt_i=1, w_i > c_j] - E[y_i| trt_i=0, w_i > c_j] \right) -\\ 
  & \left (E[y_i | trt_i=1, w_i  \leq c_j] - E[y_i| trt_i=0, w_i \leq c_j] \right).
\end{align*}
We estimate $\beta_j$ as the OLS estimate fitting the regression 
\[
y_i = \beta_0 + \beta_1 trt + \beta_{2,j} 1_{w_i > c_j} + \beta_{j} 1_{w_i > c_j} + \epsilon_i,
\]
where $1_{w_i > c_j}$ is an indicator that $w_j > c_j$. 
We then rank these estimates by the absolute value of their $t$-statistics in order to select a cut-point that gives the most significant difference in treatment effect between groups. This cut-point might be used to design future clinical trials.

In each simulation, we generate data for 200 study participants, 100 randomized to the treatment arm and 100 randomized to the control arm. We simulate the value of the biomarker as uniformly distributed between 0 and 1. The true relationship between the biomarker, the treatment, and the outcome given by 
\[
E[y_i | trt_i, w_i] = \begin{cases}
 0 \qquad & w_i < 0.5\\
  (w_i-\frac{1}{2})\cdot trt_i & w_i \geq 0.5
\end{cases}.
\]
The observed outcome for individual $i$ ($i \in 1, \dots, 200$) is $y_i = E[y_i | w_i, trt_i] + \epsilon_i$ where $\epsilon_i \sim N(0, 0.25)$. 

We chose 100 cut-points evenly spaced between 0.1 and 0.9. Rank conditional coverage and interval width averaged over 400 simulations are shown in Figure~\ref{fig:biomarker}. In this scenario, parameter estimates are very highly correlated. This results in very poor performance for the parametric bootstrap which assumes independence between estimates. Interestingly, the standard marginal intervals do well despite making the same assumption. The non-parametric bootstrap also controls the RCC though it has slight under-coverage for the least significant parameters. Unlike the marginal intervals, the non-parametric bootstrap controls the RCC by modeling the correlation structure between parameter estimates and also performs well in the simulations in Section~\ref{sec:sims_lin} making it a more reliable choice.

\begin{figure}
\centering
\includegraphics[width=\textwidth]{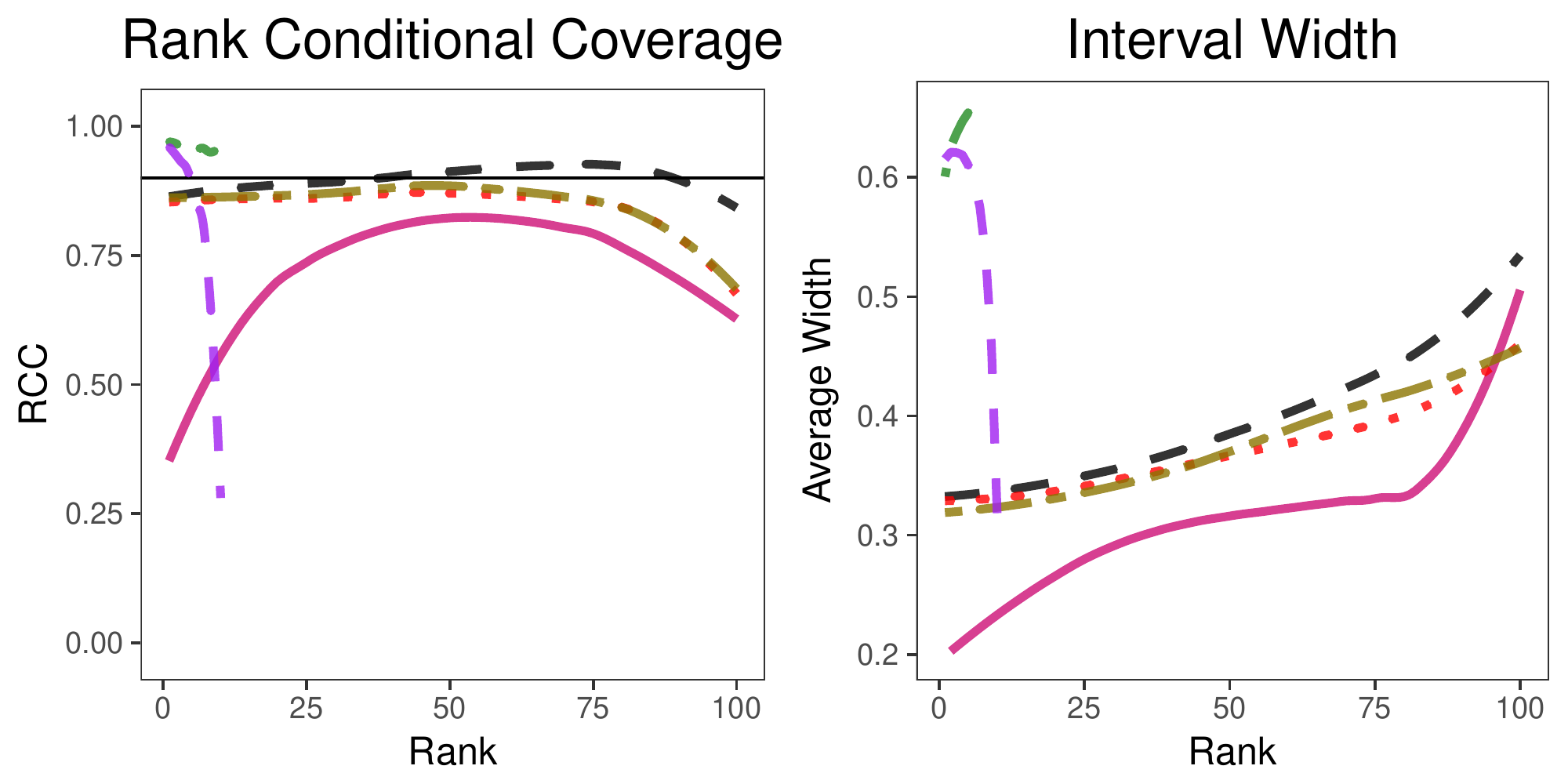}
\includegraphics[width=\textwidth]{blocks_legend.pdf}
\caption{Simulation results for Section~\ref{sec:sims_biomarker}. Rank conditional coverage (left) and interval widths (right) are shown for the top 20\% of parameters averaged over 400 simulations. Parameters are ranked by the absolute value of the test statistic. Coverage rates and widths are smoothed using loess. 
In the left-hand panel, a horizontal line shows the nominal level 90\%.
}
\label{fig:biomarker}
\end{figure}

\section{Discussion}
\label{sec:discussion}
Interval estimation when the number of parameters is large is a challenging problem often ignored in large scale studies. Out of caution, these studies are often limited to hypothesis testing but this limitation is unnecessary in many cases. We have shown that the full set of parameter estimates contains information and can be used to correct bias and generate useful confidence intervals. We have also introduced a more granular, informative concept of coverage which can be applied to confidence intervals constructed for numerous parameters. 

Rank conditional coverage is an important criterion to consider in evaluating confidence intervals for large parameter sets. As a finer grained criterion, it reveals problems that are ignored by the FCR. In many cases, using an FCR controlling procedure (even after selecting top parameter-estimates) results in very low coverage probabilities for the very largest parameters. 

In our simulations we found that rank conditional coverage is a more difficult criterion to control than the false coverage statement rate of \cite{Benjamini2005}. The two proposed bootstrapping methods almost always outperformed other methods and produced smaller intervals than all methods except the ashr method of \cite{Stephens2016}.

\bigskip
\begin{center}
{\large\bf SUPPLEMENTARY MATERIAL}
\end{center}

\begin{description}

\item[Appendix:] Contains a proof that controlling the RCC guarantees control of the FCR, two variations of Algorithm~\ref{alg:parametric}, and additional simulation results referenced in the text.

\item[R-package \texttt{rcc}:] R-package implementing Algorithm~\ref{alg:parametric},  Appendix Algorithms~\ref{alg:parametric2}, and \ref{alg:absparametric}, and Algorithm~\ref{alg:nonparametric}. (GNU zipped tar file)

\item[R-package \texttt{rcc-sims}:] R-package replicating the simulations shown in Section~\ref{sec:example}, and \ref{sec:sim}. (GNU zipped tar file)

\end{description}

\bibliographystyle{chicago}

\bibliography{ci}

\clearpage
\newpage

\section{Appendix}
\subsection{Proof that controlling RCC guarantees FCR control}
\label{sec:supp_thm}
\begin{thm}\label{thm:rcc_fcr}
If a set of confidence intervals $CI_{1}\dots CI_{p}$ controls RCC for a particular ranking scheme at level $\alpha$, then these intervals control FCR for any selection procedure which selects the top $r$ using the same ranking scheme.
\end{thm}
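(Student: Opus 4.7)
The plan is to unfold both definitions with the selection set fixed to $S = \{s(1), \ldots, s(r)\}$ and show that the FCR expression collapses into a simple average of per-rank non-coverage probabilities, each of which is bounded above by $\alpha$ thanks to the RCC hypothesis.

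First I would observe that under any selection rule that picks the top $r$ according to the ranking function $s$, the selected set is deterministic in size: $|S| = r > 0$ with probability one, so the indicator $|S| > 0$ is immaterial and the piecewise definition of $Q$ reduces to its first branch. Writing out the sum in the numerator by indexing over ranks rather than over raw indices gives
\begin{equation*}
Q = \frac{1}{r}\sum_{j \in S} \mathbf{1}_{\theta_j \notin CI_j} = \frac{1}{r}\sum_{i=1}^{r} \mathbf{1}_{\theta_{s(i)} \notin CI_{s(i)}},
\end{equation*}
where the reindexing is valid because $\{s(1),\ldots,s(r)\}$ is exactly $S$ as an unordered set.

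Next I would take expectations and swap the (finite) sum with the expectation, which turns each indicator into the per-rank non-coverage probability:
\begin{equation*}
\mathrm{FCR} = E[Q] = \frac{1}{r}\sum_{i=1}^{r} P\bigl[\theta_{s(i)} \notin CI_{s(i)}\bigr] = \frac{1}{r}\sum_{i=1}^{r}\bigl(1 - P[\theta_{s(i)} \in CI_{s(i)}]\bigr).
\end{equation*}
By the RCC-at-level-$\alpha$ hypothesis, $P[\theta_{s(i)} \in CI_{s(i)}] \geq 1-\alpha$ for every $i$, so each summand is at most $\alpha$, and hence $\mathrm{FCR} \leq \alpha$.

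There is essentially no obstacle here; the only subtlety worth flagging is making sure the reader sees that the reindexing in the first display is legitimate precisely because the selection rule is defined through the same ranking $s$ used to define the RCC --- if a different ranking were used for selection, the $i$th selected index would no longer be $s(i)$ and the per-rank RCC bounds would not apply term by term. I would state the theorem's hypothesis with this alignment made explicit, and then the three-line computation above finishes the proof.
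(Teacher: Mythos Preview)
Your argument is correct for the case you treat, and the core reindexing idea---replacing $\sum_{j\in S}$ by $\sum_{i=1}^{r}$ over ranks and then applying the per-rank RCC bound---is exactly the mechanism the paper uses as well. The difference is one of scope: you take $r$ to be a fixed, nonrandom integer, so that $|S|=r>0$ with probability one and $Q$ collapses immediately. The paper, however, explicitly allows the number of selected parameters to be data-dependent (``The number of selected parameters may be data dependent''), which covers, e.g., threshold-based rules that still select the top $r$ in the given ranking but with $r$ random and possibly zero. To handle this, the paper invokes the Benjamini--Yekutieli decomposition
\[
\mathrm{FCR}=\sum_{r=1}^{p}\frac{1}{r}\sum_{i=1}^{p}P\bigl[\theta_i\notin CI_i,\ i\in S,\ |S|=r\bigr],
\]
reindexes the inner sum over ranks on the event $\{|S|=r\}$, applies the RCC bound rank by rank, and then collapses the outer sum to $\alpha\,P[|S|\ge 1]\le\alpha$. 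Your proof is the $|S|$-deterministic special case of this; to match the paper's generality you would need to insert the conditioning on $\{|S|=r\}$ and sum over $r$, rather than asserting $|S|=r$ almost surely at the outset.
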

\begin{proof}
Let $S(\mathbf{t})$ be a selection procedure which chooses the $r$ most significant parameters based on the vector of statistics $\mathbf{t} = (t_{1} \dots t_{p})$. The number of selected parameters may be data dependent.
Using the identity from \cite{Benjamini2005}
\begin{align*}
FCR =& \sum_{r=1}^{p} \frac{1}{r}\sum_{i=1}^{p} P[\theta_{i} \not\in CI_{i}, i\in S, \vert S \vert = r]\\
=& \sum_{r=1}^{p} \frac{1}{r}\sum_{i=p-r-1}^{p} P[\theta_{s(i)} \not\in CI_{s(i)}, \vert S \vert = r]\\
= & \sum_{r=1}^{p} \frac{1}{r}\sum_{i=p-r-1}^{p} \alpha P[ \vert S \vert = r]\\
= & \alpha P[\vert S \vert \geq 1] \leq \alpha
\end{align*}
\end{proof}
\clearpage
\newpage

\subsection{Variations of the Bootstrapping Procedure in Algorithm~\ref{alg:parametric} of the Main Text}

Here we present two variations of the Parametric bootstrap presented in Algorithm~\ref{alg:parametric} of the main text. In Algorithm~\ref{alg:parametric2}, we show the additional steps necessary for bootstrapping from de-biased parameter estimates. In Algorithm~\ref{alg:parametric2} we show a variation of the algorithm for absolute value based ranking schemes. We define an absolute value based ranking scheme as any ranking scheme that is invariant to changes in the sign of any or all parameter estimates. For example, ranking estimates based on the absolute value of the $t$-statistic $\hat{\theta}_i/\hat{se}(\hat{\theta})_i$ is an absolute value based ranking scheme.
\begin{algorithm}[h!]
\caption{Simple parametric bootstrap with de-biased mean estimates}\label{alg:parametric2}
\begin{enumerate}[1.]
\item Obtain debiased estimates $\tilde{\theta}_{1} \dots \tilde{\theta}_{p}$ for $\theta_{1} \dots \theta_{p}$ for example using the method of \cite{Simon2013}.
\item For $k$ in $1 \dots K$:
\begin{enumerate}[a.]
\item Sample $\vartheta_{k,i}$ from a $N(\tilde{\theta}_{i}, 1)$ distribution for $i$ in $1\dots p$.
\item Estimate the bias at each rank $\hat{\delta}_{k,[i]}$ as 
\[
\hat{\delta}_{k,[i]} = \vartheta_{k,(i)} - \tilde{\theta}_{s_k(i)}
\]
\end{enumerate}
\item For $i$ in $1\dots p$:
\begin{enumerate}[a.]
\item Calculate empirical quantiles $\hat{H}^{-1}_{[i]}(x)$ of $\lbrace\hat\delta_{1,[i]}\dots \hat{\delta}_{K,[i]}\rbrace$
\item Generate $CI_{s(i)}^{\text{boot}}$ as in (\ref{eq:boot_ci})
\end{enumerate}
\end{enumerate}
\end{algorithm}

\begin{algorithm}[h!]
\caption{Simple parametric bootstrap based on absolute value ranking}\label{alg:absparametric}

\begin{enumerate}[1.]
\item For $k$ in $1 \dots K$:
\begin{enumerate}[a.]
\item Sample $\vartheta_{k,i}$ from a $N(\hat{\theta}_{i}, 1)$ for $i$ in $1\dots p$.  Define $\vartheta_{(i)}$ such that 
\[
\vert \vartheta_{(1)}\vert \geq \dots \geq \vert\vartheta_{(p)}\vert
\]
\item Calculate the bias at each rank $\hat{\delta}_{k,[i]}$ as 
\[
\hat{\delta}_{k,[i]} = \text{sign}(\vartheta_{k,(i)})(\vartheta_{k,(i)} - \theta_{s_k(i)})
\]
\end{enumerate}
\item For $i$ in $1\dots p$:
\begin{enumerate}[a.]
\item Calculate empirical quantiles $\hat{H}^{-1}_{[i]}(x)$ of $\lbrace \hat{\delta}_{1,[i]}\dots \delta_{K,[i]}\rbrace$
\item Generate $CI_{s(i)}^{\text{boot}}$ as 
\begin{align*}
\begin{cases}
\Big(\hat{\theta}_{(i)}-\hat{H}^{-1}_{[i]}(1-\alpha/2),\ \hat{\theta}_{(i)} -\hat{H}^{-1}_{[i]}(\alpha/2)\Big) \qquad & \hat{\theta}_{s(i)} > 0\\
\Big(\hat{\theta}_{(i)}+\hat{H}^{-1}_{[i]}(\alpha/2),\ \hat{\theta}_{(i)} + \hat{H}^{-1}_{[i]}(1-\alpha/2)\Big) \qquad & \hat{\theta}_{s(i)} \leq 0
\end{cases}
\end{align*}
\end{enumerate}
\end{enumerate}
\end{algorithm}

\clearpage
\subsection{Simulations Results Using a Block-Based Ranking Scheme}\label{sec:sims_block}

Here we consider the Simulations described in Section~\ref{sec:sims_lin} of the main text, but instead of simply ranking by test statistic, we use a ranking scheme that incorporates the block correlation structure of the features. 
In Section~\ref{sec:sims_lin} of the main text, we describe how 1000 features are simulated in 100 blocks each containing 10 correlated features. Consider a ranking scheme in which we first select the most significant feature in each block and then rank these selected features. 

This is similar to the way that genome-wide association study results are often presented. In these studies, we often discover that many variants in a small genomic region are associated with the trait. Usually this arises because there are many variants that are correlated with a single causal variant so researches will typically describe only the top variant in the region.

Rank conditional coverage and interval widths averaged over 400 simulations for each scenario are shown in Figure~\ref{fig:linreg_cw}. 
The methods of \cite{Reid2014} and \cite{Weinstein2013} are not shown because these are based on selection rules based on the test statistic alone. These results show a similar pattern as those presented in Section~\ref{sec:sims_lin} of the main text --- the parametric and non-parametric boostrap confidence intervals have close to the nominal RCC and are shorter than the naive intervals. Both ash and the naive intervals have much lower RCC than the nominal level. In the scenario with the largest correlation between features ($\rho=0.8$), we find undercoverage of the least significant parameters by the parametric bootstrap. This is caused by correlation between the parameter estimates which are assumed to be independent in this implementation of the parametric bootstrap.

\begin{figure}
\centering
Average Rank Conditional Coverage\\
\includegraphics[width=\textwidth]{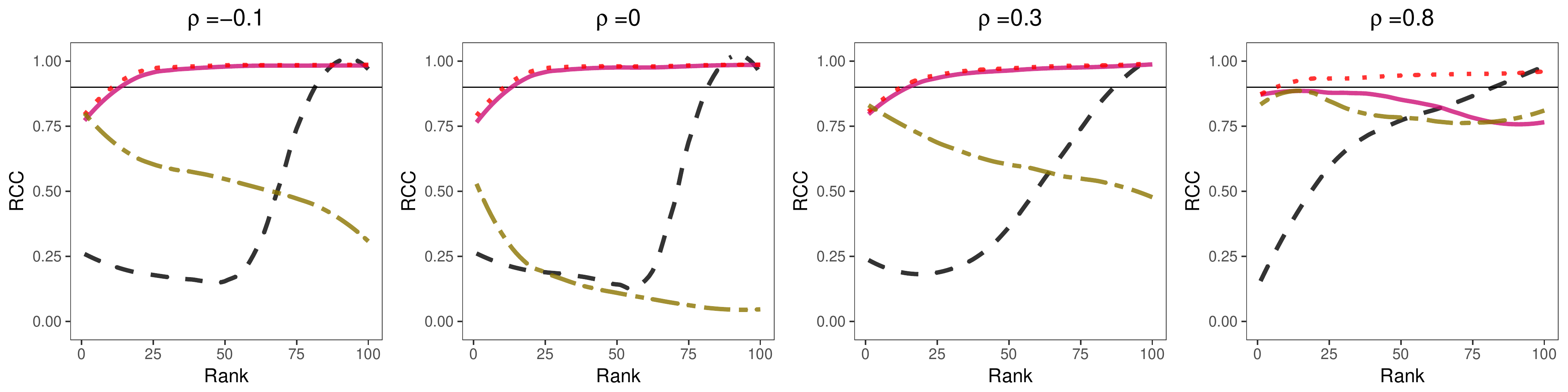}\\
Interval Width\\
\includegraphics[width=\textwidth]{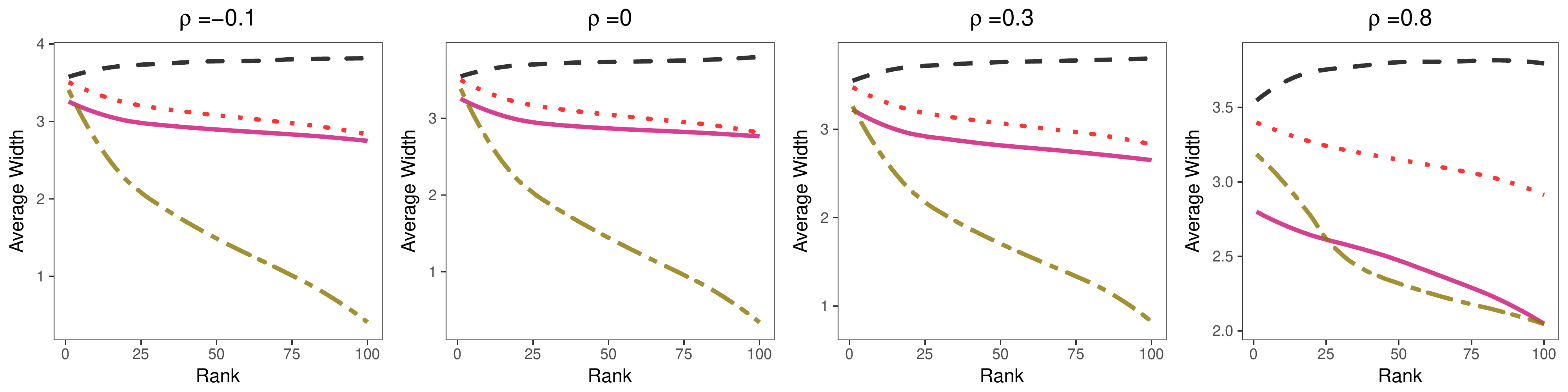}\\
\includegraphics[width=0.667\textwidth]{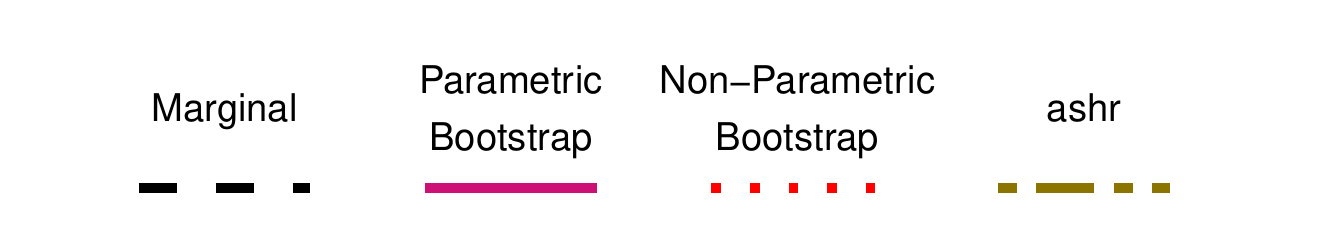}

\caption{Simulation results for Appendix Section~\ref{sec:sims_block}. Rank conditional coverage (top) and interval widths (bottom) are shown for all 100 ranked parameters averaged over 400 simulations. Parameters are ranked by first choosing the most significant parameter in each block and then ranking the selected parameters based on the absolute value of the test statistic. Coverage rates and widths are smoothed using loess. 
In the top panel, a horizontal line shows the nominal level 90\%.
}
\label{fig:linreg_cw}
\end{figure}

\end{document}